\newcommand{\eqdf}{\stackrel{\rm def}{=}}
\newcommand{\set}[1]{\left\{ #1 \right\}}
\def\Set#1{\left\{#1\right\}}
\newcommand{\paren}[1]{\left( #1 \right)}
\newcommand{\inv}[1]{\frac{1}{#1}}
\newcommand{\argmin}{\ensuremath{\mbox{argmin}}}
\newcommand{\argmax}{\ensuremath{\mbox{argmax}}}
\newcommand{\eps}{\varepsilon}
\newcommand{\Eqr}[1]{Eq.~(\ref{#1})}
\newcommand{\opt}{\textsc{opt}}
\newcommand{\tp}{\tilde{p}}
\newcommand{\oi}{\mathrm{next}}
\newcommand{\ui}{\mathrm{best}}
\newcommand{\cM}{\mathcal{M}}
\newcommand{\E}[1]{\textbf{E} \left[ #1 \right]}
\def\Feasible{\textsc{Feasible}}
\renewcommand{\paragraph}[1]{\par\smallskip\noindent\textbf{#1}}
\begin{document}

\title[Rent, Lease or Buy]{Rent, Lease or Buy: \\
Randomized Algorithms for Multislope Ski Rental}
\author[bgu]{Z.~Lotker}{Zvi Lotker}
\address[bgu]{Dept.\ of Communication Systems Engineering, Ben Gurion
  University, Beer Sheva 84105 , Israel.}

\author[tau]{B.~Patt-Shamir}{Boaz Patt-Shamir}
\address[tau]{School of Electrical Engineering, Tel Aviv University,
  Tel Aviv 69978, Israel. }

\author[haifa]{D.~Rawitz}{Dror Rawitz} 
\address[haifa]{Faculty of
  Science and Science Education \& C.R.I.,
  University of Haifa, Haifa 31905, Israel.}
\email{zvilo@cse.bgu.ac.il, boaz@eng.tau.ac.il, rawitz@cri.haifa.ac.il}

%
%

\thanks{The second author was supported in part by the Israel Science
  Foundation (grant 664/05) and by Israel Ministry of Science and
  Technology Foundation.}


\keywords{competitive analysis; ski rental; randomized algorithms}


\begin{abstract}
  In the Multislope Ski Rental problem, the user needs a certain
  resource for some unknown period of time.  To use the resource, the
  user must subscribe to one of several options, each of which
  consists of a one-time setup cost (``buying price''), and cost
  proportional to the duration of the usage (``rental rate'').  The
  larger the price, the smaller the rent.  The actual usage time is
  determined by an adversary, and the goal of an algorithm is to
  minimize the cost by choosing the best option at any point in time.
  Multislope Ski Rental is a natural generalization of the classical
  Ski Rental problem (where the only options are pure rent and pure
  buy), which is one of the fundamental problems of online computation.
  The Multislope Ski Rental problem is an abstraction of many problems
  where 
  online decisions cannot be modeled by just two options, e.g., power
  management in systems which can be shut down in parts.  In this
  paper we study randomized algorithms for
  Multislope Ski Rental. Our results include 
  the best possible online randomized strategy for any \emph{additive}
  instance, where the cost of switching from one option to another is
  the difference in their buying prices; and an algorithm that
  produces an $e$-competitive randomized strategy for any
  (non-additive) instance.
%
\end{abstract}

\maketitle

\stacsheading{2008}{503-514}{Bordeaux}
\firstpageno{503}

\section{Introduction}

Arguably, the ``rent or buy'' dilemma is the fundamental problem in
online algorithms: intuitively, there is an ongoing game which may end at any
moment, and the question is to commit or not to commit
. Choosing to commit (the `buy' option) implies paying large cost 
immediately, but low overall cost if the game lasts for a long time.
Choosing not to commit (the `rent' option) means high spending rate,
but lower overall cost if the game ends quickly.  This problem was
first abstracted in the ``Ski Rental'' formulation \cite{KMRS88} as
follows. In the buy option, a one-time cost is incurred, and
thereafter usage is free of charge.  In the rent option, the cost is
proportional to usage time, and there is no one-time cost.  The
deterministic solution is straightforward (with competitive factor 2).
In the randomized model, the algorithm chooses a random time to switch
from the rent to the buy option (the adversary is assumed to know the
algorithm but not the actual outcomes of random
experiments).  As is well known, the best possible online strategy for
classical ski rental has
competitive ratio of $\frac{e}{e-1}\approx1.582$.

In many realistic cases, there may be some intermediate options
between the extreme alternatives of pure buy and pure rent: in
general, it may be possible to pay only a part of the buying cost and then pay
only partial rent.  The general problem, 
called here the \emph{Multislope Ski Rental} problem, can be described
as follows. There are several \emph{states} (or \emph{slopes}), where
each state $i$ is characterized by two numbers: a \emph{buying cost}
$b_i$ and a \emph{rental rate} $r_i$ (see Fig.~\ref{Fig:opt}).
Without loss of generality
, we may assume that for all $i$,
$b_i<b_{i+1}$ and $r_i>r_{i+1}$, namely that after ordering the states
in increasing buying costs, the rental rates are decreasing.  The
basic semantics of the multislope problem is natural: to hold the
resource under state $i$ for $t$ time units, the user is charged
$b_i+r_it$ cost units. An adversary gets to choose how long the
game will last, and the task is to minimize total cost until the game
is over. 

The Multislope Ski Rental problem introduces entirely new difficulties
when compared to the classical Ski Rental problem.  Intuitively,
whereas the only question in the classical version is when to buy, in
the multislope version we need also to answer the question of what to
buy.  Another way to see the difficulty is that the number of
potential transitions from one slope to another in a strategy is one
less than the number of slopes, and finding a single point of
transition is qualitatively easier than finding more than one such
point
.

In addition, the possibility of multiple transitions forces us to define
the relation between multiple ``buys.''  Following~\cite{AIS04}, we
distinguish between two natural cases.  In the \emph{additive} case,
buying costs are cumulative, namely to move from state $i$ to state
$j$ we only need to pay the difference in buying prices $b_j-b_i$. In the
\emph{non-additive} case, there is an arbitrarily defined transition cost
$b_{ij}$ for each pair of states $i$ and $j$.  


\paragraph{Our results.}
In this paper we analyze randomized strategies for Multislope Ski
Rental.  (We use the term \emph{strategy} to refer to the procedure
that makes online decisions, and the term \emph{algorithm} to refer to
the procedure that computes strategies.)  Our main focus is the
additive case, and our main result is an efficient algorithm that
computes the best possible randomized online strategy for any given
instance of additive Multislope Ski Rental problem.
%
%
We first give a simpler algorithm which decomposes a $(k+1)$-slope
instance into $k$ two-slope instances, whose
competitive factor is $e\over e-1$.
%
For the non-additive model, we give a  simple 
$e$-competitive randomized strategy. 


\paragraph{Related Work.}
Variants of ski rental are implicit in many online problems.
The classical (two-slope) ski rental problem, where the buying cost of
the first slope and the rental
rate of the second slope are $0$, was introduced in~\cite{KMRS88},
with optimal strategies achieving competitive factors of 2 (deterministic)
 and $\frac{e}{e-1}$ (randomized). Karlin et al.~\cite{KKR03} apply the
randomized strategy to TCP acknowledgment mechanism and other
problems.
The classical ski rental is sometimes called the \emph{leasing}
problem~\cite{BorElY98}.

Azar et al.~\cite{ABFFLR99} consider a problem that
can be viewed as  non-additive multislope ski rental where
slopes become available over time, and obtain 
an online
strategy whose competitive ratio is $4+2\sqrt{2} \approx 6.83$.
Bejerano et al.~\cite{BCN00}, motivated by rerouting in ATM networks,
study the non-additive multislope problem.  They give a deterministic
$4$-competitive strategy, and show that the factor of $4$ holds
assuming only that the slopes are concave, i.e., when the rent in a
slope may decrease with time.
Damaschke~\cite{Dam03} considers a static version of the  problem
from~\cite{ABFFLR99}, namely
non-additive multislope ski rental problem where each slope is bought
``from scratch.''%
\footnote{It can be shown that strategies
that work for this case also work for the general non-additive case
(see Section~\ref{sec:nonadditive}).
}
  For deterministic strategies, \cite{Dam03} gives an upper
bound of $4$ and a lower bound of  ${5+\sqrt{5}\over2}\approx3.618$;
\cite{Dam03}  also  presents a randomized strategy whose
competitive factor is $2/\ln 2 = 2.88$.
As far as we know, Damaschke's strategy is the only randomized
strategy for multislope ski rental to appear in the literature.
%

Irani et al.~\cite{IGS02} present a deterministic $2$-competitive
strategy for the additive model that generalizes the strategy for the
two slopes case.  They motivate their work by energy saving: each
slope corresponds to some partial ``sleep'' mode of the system.
Augustine et al.~\cite{AIS04} present a dynamic program that computes
the best deterministic strategy for non-additive multislope instances.
%
The case where the length of the game is a stochastic variable with
known distribution is also considered in both~\cite{IGS02,AIS04}.

Meyerson \cite{Meyerson05} defines the seemingly related ``parking
permit'' problem, where there are $k$ types of permits of different
costs, such that each permit allows usage for some duration of
time. Meyerson's results indicate that the problems are not very
closely related, at least from the competitive analysis point of view:
It is shown in \cite{Meyerson05} that the competitive ratio of the
parking permit problem is
$\Theta(k)$ and $\Theta(\log k)$ for deterministic and randomized
strategies, respectively. 


\paragraph{Organization.}
The remainder of this paper is organized as follows. In
Section~\ref{sec:model} we define the basic additive model and make a
few preliminary observations.  
In Section~\ref{sec:par} we give a simple algorithm to solve the
multislope problem, and in
Section~\ref{sec:additive} we present our main result: an optimal
online algorithm.  
An $e$-competitive algorithm for the non-additive case is presented in
Section \ref{sec:nonadditive}.
%
%

\begin{figure}
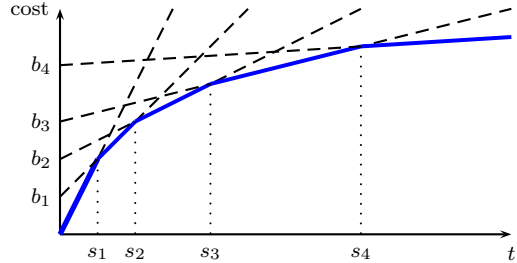

\begin{center}
\begin{scriptsize}
\psset{unit=0.5}
\pspicture(-1,0)(12,6)
\psline[arrows=->](0,0)(12,0)
\rput(12,-0.5){$t$}
\psline[arrows=->](0,0)(0,6)
\rput(-.8,6){cost}
\psline[linestyle=dashed](0,0)(3,6)
\psline[linewidth=2pt,linecolor=blue](0,0)(1,2)
\psline[linestyle=dotted](1,0)(1,2)
\rput(1,-0.5){$s_1$}
\psline[linestyle=dashed](0,1)(5,6)
\psline[linewidth=1.5pt,linecolor=blue](1,2)(2,3)
\rput(-0.5,1){$b_1$}
\psline[linestyle=dotted](2,0)(2,3)
\rput(2,-0.5){$s_2$}
\psline[linestyle=dashed](0,2)(8,6)
\psline[linewidth=1.5pt,linecolor=blue](2,3)(4,4)
\rput(-0.5,2){$b_2$}
\psline[linestyle=dotted](4,0)(4,4)
\rput(4,-0.5){$s_3$}
\psline[linestyle=dashed](0,3)(12,6)
\psline[linewidth=1.5pt,linecolor=blue](4,4)(8,5)
\rput(-0.5,3){$b_3$}
\psline[linestyle=dotted](8,0)(8,5)
\rput(8,-0.5){$s_4$}
\psline[linestyle=dashed](0,4.5)(8,5)
\psline[linewidth=1.5pt,linecolor=blue](8,5)(12,5.25)
\rput(-0.5,4.5){$b_4$}
\endpspicture
\end{scriptsize}
\end{center}
\caption{\em A multislope ski rental instance with 5 slopes: 
The 
thick line indicates the optimal cost as a function of the game
duration time.}
\label{Fig:opt}
\end{figure}

\section{Problem Statement and Preliminary Observations}
\label{sec:model}

In this section we formalize the \emph{additive} version of the
multislope ski rental problem.  A $k$-ski rental instance is defined
by a set of $k+1$ \emph{states}, and for each state $i$ there is a
\emph{buying cost} $b_i$ and a \emph{renting cost} $r_i$.  A state
can be represented by a line: the $i$th state corresponds to the line
$y=b_i+r_ix$. Fig.~\ref{Fig:opt} gives a geometrical interpretation
of a multislope ski rental instance with five states. We 
use the terms ``state'' and 
``slope'' interchangeably.

The requirement of the problem is to specify, for all times $t$, which
slope is chosen at time $t$. We assume that state transitions can be only
forward, and that states cannot be skipped,
i.e., the only transitions allowed are of the type $i\rightarrow i+1$.
We stress that this assumption holds without
loss of generality in the additive model, where a transition from
state $i\rightarrow j$ for $j>i+1$ is equivalent to a sequence of transitions
$i\rightarrow i+1\rightarrow\ldots\rightarrow j$
(cf.~Section~\ref{sec:nonadditive}).  It follows that a
\emph{deterministic strategy} for the additive multislope ski rental
problem is a monotone non-decreasing sequence $(t_1,\ldots,t_k)$ where
$t_i\in[0,\infty)$ corresponds to the transition $i-1\rightarrow i$. 

A \emph{randomized strategy} can be described using a probability
distribution over the family of deterministic strategies.  However, in
this paper we use another way to describe randomized strategies.  We
specify, for all times $t$, a probability distribution over the set of
$k+1$ slopes.  The intuition is that this distribution determines the
actual cost paid by any online strategy.  Formally, a \emph{randomized
profile} (or simply a \emph{profile}) is specified by a vector $p(t) =
(p_0(t),\ldots,p_k(t))$ of $k+1$ functions, where $p_i(t)$ is the
probability to be in state $i$ at time $t$.  The correctness
requirement of a profile is 
$\sum_{i=0}^k p_i(t) = 1$ for all $t \geq 0$.  Clearly, any strategy
is related to some profile.  In the sequel we consider a specific type
of profiles for which a randomized strategy can be easily obtained.

The performance of a profile is defined by its total accrued cost,
which consists of two parts as follows.  Given a randomized profile
$p$, the expected \emph{rental cost} of $p$ at time $t$ is
\[
\textstyle
R_p(t) \eqdf \sum_i p_i(t) \cdot r_i
~~,
\]
and the expected total rental cost up to time $t$ is 
\[
\int_{z=0}^t R_p(z) dz~.
\]
The second part of the cost is the buying cost. In this case it is
easier to define the cumulative buying cost. Specifically, the
expected \emph{total buying
cost} up to time $t$ is
\[
\textstyle
B_p(t) \eqdf \sum_i p_i(t) \cdot b_i~.
\]
The expected \emph{total cost} for $p$ up to time $t$ is
\[
X_p(t) \eqdf B_p(t) + \int_{z=0}^t R_p(z) dz~.
\]
The goal of the algorithm is to minimize total cost up to time $t$ for
any given $t\geq 0$, with respect to the best possible.  Intuitively,
we think of a game that may end at any time.  For any possible ending
time, we compare the total cost of the algorithm with the best
possible (offline) cost.  To this end, consider the optimal solution
of a given instance.  If the games ends at time $t$, the optimal
solution is to select the slope with the least cost at time $t$ (the
thick line in Fig.~\ref{Fig:opt} denotes the optimal cost for any
given $t$).  More formally, the optimal offline cost at time $t$ is
\[
\opt(t) = \min_i (b_i + r_i \cdot t)~.
\]
For $i >0$, denote by $s_i$ the time $t$ instance where $b_{i-1} +
r_{i-1} \cdot t = b_i + r_i \cdot t$, and  define $s_0 = 0$.  It
follows that the optimal slope for a game ending at time $t$ is the
slope $i$ for which  $t \in [s_i,s_{i+1}]$ (if $t=s_i$ for some $i$
then both slopes $i-1$ and $i$ are optimal).

Finally, let us rule out a few trivial cases.  First, note that if
there are two slopes such that $b_i \leq b_j$ and $r_i \leq r_j$ then
the cost incurred by slope $j$ is never less than the cost incurred
slope $i$, and we may therefore just ignore slope $j$ from the
instance.  Consequently, we will assume henceforth, without loss of
generality, that the states are ordered such that $r_{i-1} > r_i$ and
$b_{i-1} < b_i$ for $1 \leq i \leq k$.

Second, using similar reasoning, note that we may
consider only strategies that are monotone over time with respect to
majorization \cite{majorization}, i.e., strategies such
that for any two times $t\le t'$  we have
\begin{equation}
\label{Eqn:back}
\sum_{i=0}^j p_i(t) \ge \sum_{i=0}^j p_i(t')~.
\end{equation}
Intuitively, \Eqr{Eqn:back} means that there is no point is ``rolling
back'' purchases: if at a given time we have a certain composition of
the slopes, then at any later time the composition of slopes may only
improve. Note that \Eqr{Eqn:back} implies that $B_p$ is monotone
increasing and $R_p$ is monotone decreasing, i.e., over time, the
strategy invests non-negative amounts in buying, resulting in
decreased rental rates.


\section{An $e\over e-1$-Competitive Algorithm}
\label{sec:par}

In this section we describe how to solve the multislope problem by
reducing it to the classical two-slope version, resulting in a
randomized strategy whose competitive factor is $e\over e-1$.  This
result serves as a warm-up and it also gives us a concrete upper bound
on the competitiveness of the algorithm presented in
Section~\ref{sec:additive}.

\paragraph{The case of $r_k=0$.}
Suppose we are given an instance $(b,r)$ with $k+1$ slopes, where
$r_k=0$.  We define the following $k$ instances of the classical
two-slopes ski rental problem: in instance $i$ for $i \in
\{1,\ldots,k\}$, we set
\begin{equation}
  \label{eq:parallel}
\mbox{instance $i$:}~~~~~~~~~
b^i_0=0~\mbox{ and }~  r^i_0 = r_{i-1}-r_i~;~~~ b^i_1=b_i - b_{i-1}
~\mbox{ and }~r^i_1= 0~.
\end{equation}

Observe that $b^i_1 = r^i_0 \cdot s_i$, i.e., the two slopes of the
$i$th instance intersect exactly at $s_i$, their intersection point at
the original multislope instance.  Now, let $\opt(t)$ denote the
optimal offline solution to the original multislope instance, and let
$\opt^i(t)$ denote the optimal solution of the $i$th instance at time
$t$, i.e., $\opt^i(t) = \min \{ b^i_1,r^i_0 \cdot t \}$.  We have the
following.

\begin{lemma}
\label{lem:decomposition}
$\opt(t) = \sum_{i=1}^k \opt^i(t)$.
\end{lemma}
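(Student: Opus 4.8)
The plan is to compute the right-hand side explicitly on each linear piece of $\opt$ and check that it coincides with the left-hand side. The starting point is the observation already made above that the two slopes of instance $i$ intersect exactly at $s_i$, since $b^i_1 = r^i_0 \cdot s_i$. Because $r^i_0 = r_{i-1} - r_i > 0$, the rental line $r^i_0 t$ lies below the horizontal buying line $b^i_1$ precisely for $t \le s_i$. Hence
\[
\opt^i(t) =
\begin{cases}
(r_{i-1}-r_i)\,t, & t \le s_i,\\
b_i - b_{i-1}, & t \ge s_i,
\end{cases}
\]
and the two expressions agree at $t = s_i$, so $\opt^i$ is continuous.

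Next I would fix a time $t$ and let $j$ be the index with $t \in [s_j, s_{j+1}]$, which exists because the breakpoints satisfy $s_1 \le \cdots \le s_k$ (this monotonicity is exactly what makes the characterization $\opt(t) = b_j + r_j t$ on $[s_j,s_{j+1}]$ meaningful, as observed in Section~\ref{sec:model}). Monotonicity of the $s_i$ is the crucial structural fact: it guarantees that $t \ge s_i$ for every $i \le j$ and $t \le s_i$ for every $i > j$, so that in the sum $\sum_{i=1}^k \opt^i(t)$ the first $j$ terms are in their ``bought'' regime and the remaining $k-j$ terms are in their ``rented'' regime. Without this ordering the per-instance minima could not be resolved simultaneously; I regard justifying that we may assume $s_1 \le \cdots \le s_k$ (equivalently, that every slope appears on the lower envelope) as the only genuinely delicate point, and it is already implicit in the interval characterization of $\opt$ given in the model section.

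It then remains to telescope. Splitting the sum at $j$ according to the two regimes gives
\[
\sum_{i=1}^k \opt^i(t) = \sum_{i=1}^j (b_i - b_{i-1}) \;+\; t\sum_{i=j+1}^k (r_{i-1} - r_i)
= (b_j - b_0) + (r_j - r_k)\,t .
\]
Using the normalization $b_0 = 0$ and the standing assumption $r_k = 0$, the right-hand side equals $b_j + r_j t$, which is precisely $\opt(t)$ on $[s_j, s_{j+1}]$. Since $t$ was arbitrary, this establishes the identity for all $t \ge 0$; the boundary times $t = s_i$ require no separate treatment because each $\opt^i$ is continuous there.
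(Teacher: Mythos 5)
Your proof is correct and takes essentially the same route as the paper's: both split $\sum_{i=1}^k \opt^i(t)$ into the instances already in their ``bought'' regime ($s_i \le t$) and those still ``renting'' ($s_i > t$), and then telescope the two sums to obtain $b_j + r_j\,t = \opt(t)$. The only difference is presentational: you make explicit the ingredients the paper leaves implicit (the ordering $s_1 \le \cdots \le s_k$, and the normalizations $b_0 = 0$ and $r_k = 0$, the latter being the standing assumption of that subsection), which is a reasonable bit of extra care rather than a new idea.
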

\proof
Consider a time $t$ and let $i(t)$ be the optimal multislope state at
time $t$.  Then,
\begin{eqnarray*}
\sum_{i=1}^k \opt^i(t) 
&=& \sum_{i : s_i \leq t} b^i_1 +
  \sum_{i : s_i > t} r^i_0 \cdot t\\
&=& \sum_{i : s_i \leq t} (b_i - b_{i-1}) +\!\! 
  \sum_{i : s_i > t} (r_{i-1} - r_i) \cdot t
~~=~~ b_{i(t)} + r_{i(t)} \cdot t
~~=~~ \opt(t)~.
\end{eqnarray*}

\vspace*{-1mm}
\qed

Given the decomposition (\ref{eq:parallel}), it is easy to obtain a
strategy for any multislope instance by combining strategies for $k$
classical instances. Specifically, what we do is as follows. Let $p^i$
be the $\frac{e}{e-1}$-competitive profile for the $i$th (two slope)
instance (see~\cite{KMRS88}).  We define a profile $\hat{p}$ for the
multislope instance as follows: $\hat{p}_i(t) = p^i_1(t) -
p^{i+1}_1(t)$ for $i \in \{1,\ldots,k-1 \}$, $\hat{p}_0(t) =
p^1_0(t)$, and $\hat{p}_k(t) = p^k_1(t)$.  We first prove that the
profile is well defined.

\begin{lemma}
\label{lem-well-defined}
\begin{inparaenum}[(1)]
\item \label{lem:monotone}
      $p^i_1(t) \leq p^{i-1}_1(t)$ for every $i \in \{1,\ldots,k\}$
      and time $t$.
\item \label{sumto1} 
      $\sum_{i=0}^k \hat{p}_i(t)=1$~.
\end{inparaenum}
\end{lemma}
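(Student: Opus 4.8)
The plan is to dispatch part~(\ref{sumto1}) quickly and then concentrate on part~(\ref{lem:monotone}), which carries the real content. For part~(\ref{sumto1}) I would simply expand the definition of $\hat p$ and observe that the middle terms telescope:
\[
\sum_{i=0}^k \hat p_i(t) = p^1_0(t) + \sum_{i=1}^{k-1}\bigl(p^i_1(t) - p^{i+1}_1(t)\bigr) + p^k_1(t) = p^1_0(t) + \bigl(p^1_1(t) - p^k_1(t)\bigr) + p^k_1(t),
\]
which equals $p^1_0(t) + p^1_1(t) = 1$ because $p^1$ is a two-slope profile and hence satisfies the correctness requirement $p^1_0(t)+p^1_1(t)=1$ for all $t$. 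Note that this computation does not use part~(\ref{lem:monotone}); rather, part~(\ref{lem:monotone}) is what guarantees that each individual $\hat p_i(t)$ is nonnegative, so that $\hat p$ is a genuine probability distribution and not merely a vector summing to $1$.

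For part~(\ref{lem:monotone}) I would appeal to the explicit form of the optimal two-slope profile of~\cite{KMRS88}. For a classical instance with buying cost $b$ and rental rate $\rho$ whose two slopes cross at time $s = b/\rho$, the optimal profile places probability $\frac{e^{t/s}-1}{e-1}$ on the bought slope for $t \le s$, and probability $1$ for $t \ge s$. By the remark following \Eqr{eq:parallel}, the two slopes of instance $i$ cross exactly at $s_i$, so $p^i_1(t)$ is precisely this function with $s=s_i$. Two facts then finish the argument. First, the crossing times are monotone, $s_{i-1} \le s_i$, which is exactly the ordering of breakpoints established in Section~\ref{sec:model} (where the optimal slope for a game ending at $t$ is the index with $t\in[s_i,s_{i+1}]$). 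Second, for every fixed $t$ the map $s \mapsto \frac{e^{t/s}-1}{e-1}$ is nonincreasing in $s$, since increasing $s$ decreases the exponent $t/s$.

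Combining the two facts gives the claim by a short case split on $t$: when $t \le s_{i-1} \le s_i$ both profiles are given by the exponential formula, and monotonicity in the crossing parameter yields $p^i_1(t) \le p^{i-1}_1(t)$; when $t \ge s_{i-1}$ we have $p^{i-1}_1(t) = 1 \ge p^i_1(t)$ trivially. The main obstacle, such as it is, is purely bookkeeping: stating the monotonicity of the profile in the crossing parameter $s$ correctly and handling the boundary at $t=s_{i-1}$, which is precisely where the exponential formula for instance $i-1$ saturates to $1$. Everything else follows immediately from the explicit formula and the ordering $s_{i-1}\le s_i$.
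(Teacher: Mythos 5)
Your proof is correct and follows essentially the same route as the paper's: the same telescoping computation for part~(2), and for part~(1) the same appeal to the explicit profile $p^i_1(t)=(e^{t\cdot r^i_0/b^i_1}-1)/(e-1)$ together with the ordering $s_{i-1}<s_i$ of the crossover points. If anything, your version is slightly more careful than the paper's, since you explicitly handle the saturation of the profile at $1$ for $t\ge s_{i-1}$ via a case split, a detail the paper leaves implicit.
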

\begin{proof}
By the algorithm for classical ski rental, we have that the strategy
for the $i$ instance is $p^i_1(t)= (e^{t \cdot r_0^i/b^i_1}-1)/(e-1)$.
Claim~(\ref{lem:monotone}) of the lemma now follows from that fact
that $b^i_1/r^i_0 = s_i > s_{i-1} = b^{i-1}_1/r^{i-1}_0$ for every $i
\in \{1,\ldots,k\}$. Claim~(\ref{sumto1}) follows from the telescopic
sum
\[
\sum_{i=0}^k \hat{p}_i(t)
= p^1_0(t) + \sum_{i=1}^{k-1} (p^i_1(t) - p^{i+1}_1(t)) + p^k_1(t) =
p^1_0(t) + p^1_1(t) = 1
\ .
\]
\end{proof}

Next, we show how to convert the profile
$\hat{p}$ into a strategy. Note that the strategy uses a single random
experiment, since arbitrary dependence between the different $p_i$s
are allowed.

\begin{lemma}
\label{Lem:profile}
Given $\hat{p}$ one can obtain an online strategy whose profile is
$\hat{p}$.
\end{lemma}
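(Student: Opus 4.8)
The plan is to realize $\hat p$ by a single random draw that selects a monotone trajectory through the states. Recall from Section~\ref{sec:model} that a deterministic strategy is a nondecreasing sequence $(t_1,\dots,t_k)$ with $t_m$ the time of the transition $m-1\to m$; writing $\tau_m=t_m$ (and setting $\tau_0=0$, $\tau_{k+1}=\infty$), such a strategy occupies state $i$ precisely during $[\tau_i,\tau_{i+1})$, and legality means $\tau_1\le\cdots\le\tau_k$. For the profile $\hat p$ to be reproduced, the event ``state $>j$ at time $t$'' must have probability $\sum_{i>j}\hat p_i(t)$; since transitions go only forward and skip no state, this event coincides with $\{\tau_{j+1}\le t\}$. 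I would therefore define, for each $m\in\{1,\dots,k\}$, the target cumulative distribution $G_m(t)=\sum_{i\ge m}\hat p_i(t)$, which by the same telescoping computation used in the proof of Lemma~\ref{lem-well-defined} equals $p^m_1(t)$.

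Next I would check that each $G_m$ is a genuine distribution function: by the explicit formula $p^m_1(t)=(e^{t r^m_0/b^m_1}-1)/(e-1)$ it rises continuously from $0$ to the value $1$ attained at $t=s_m$ (and stays at $1$ thereafter), so it is nondecreasing, right-continuous, and ranges over $[0,1]$. The crucial structural fact is the pointwise ordering $G_1(t)\ge G_2(t)\ge\cdots\ge G_k(t)$, which is exactly Lemma~\ref{lem-well-defined}(\ref{lem:monotone}) and is the same as the majorization-monotonicity condition \Eqr{Eqn:back}. This ordering says the desired transition times are stochastically increasing in $m$, and it is precisely what lets a single source of randomness produce a legal (monotone) trajectory.

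The construction itself is the monotone coupling by inverse-CDF sampling: draw one variable $U$ uniformly on $[0,1]$ at time $0$, and set $\tau_m=\inf\{t:G_m(t)\ge U\}$ for each $m$. Because $G_1\ge\cdots\ge G_k$ pointwise, their generalized inverses satisfy $\tau_1\le\cdots\le\tau_k$, so the resulting sequence is a legal deterministic strategy, and it is genuinely online: having fixed $U$ once, at every time $t$ the algorithm occupies the unique state $i$ with $\tau_i\le t<\tau_{i+1}$, moving forward only. Finally I would verify that the induced profile is $\hat p$: standard inverse sampling gives $\Pr[\tau_m\le t]=G_m(t)$, whence $\Pr[\text{state}=i\text{ at }t]=\Pr[\tau_i\le t<\tau_{i+1}]=G_i(t)-G_{i+1}(t)=\hat p_i(t)$, with the boundary cases $i=0$ and $i=k$ covered by the conventions $G_0\equiv1$ and $G_{k+1}\equiv0$.

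The one point needing care, and the only place the hypotheses really bite, is the monotonicity of the transition times: without the pointwise ordering of the $G_m$ a single uniform draw could yield $\tau_m>\tau_{m+1}$, producing an illegal state-skipping or ``rollback'' trajectory. Securing that ordering, already in hand via Lemma~\ref{lem-well-defined}(\ref{lem:monotone}) and \Eqr{Eqn:back}, is therefore the heart of the argument; everything else is a routine property of inverse-CDF sampling.
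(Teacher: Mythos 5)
Your proposal is correct and takes essentially the same approach as the paper: the paper's proof also draws a single uniform $U\in[0,1]$ and defines the $i$th transition time as the moment when the cumulative tail sum $\hat{P}_i(t)=\sum_{j\ge i}\hat{p}_j(t)$ (your $G_i$) crosses $U$, i.e., inverse-CDF sampling coupled through one random draw. The paper states this construction in three lines without verification; your additional checks (that each $G_m$ is a distribution function, that the ordering $G_1\ge\cdots\ge G_k$ from Lemma~\ref{lem-well-defined} makes the transition times monotone, and that the induced profile telescopes back to $\hat{p}$) are exactly the details the paper leaves implicit.
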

\begin{proof}
Define $\hat{P}_i(t) \eqdf \sum_{j \geq i} \hat{p}_j(t)$ and let $U$
be a random variable that is chosen uniformly from $[0,1]$.  The
strategy is as follows: we move from state $i$ to state $i+1$ when $U
= \hat{P}_i(t)$ for every state $i$.  Namely, the $i$th transition
time $t_i$ is the time $t$ such that $U = \hat{P}_i(t)$.
\end{proof}

Thus we obtain the following:

\begin{theorem}
The expected cost of the strategy defined by $\hat{p}$ is at most
$\frac{e}{e-1}$ times the optimal offline cost.
\end{theorem}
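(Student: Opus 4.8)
The plan is to reduce the claim to a single cost identity: I will show that for every time $t$ the total cost of the combined profile splits exactly as $X_{\hat p}(t) = \sum_{i=1}^k X_{p^i}(t)$. Given this, the theorem follows in one line. Each $p^i$ is $\frac{e}{e-1}$-competitive for the $i$th two-slope instance, so $X_{p^i}(t) \le \frac{e}{e-1}\,\opt^i(t)$; summing over $i$ and applying Lemma~\ref{lem:decomposition} gives $X_{\hat p}(t) = \sum_i X_{p^i}(t) \le \frac{e}{e-1}\sum_i \opt^i(t) = \frac{e}{e-1}\,\opt(t)$. Since Lemma~\ref{Lem:profile} realizes $\hat p$ as the profile of an actual online strategy, this bounds the expected cost of that strategy.

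To prove the identity I would handle the buying term $B_{\hat p}(t)$ and the rental term $R_{\hat p}(t)$ separately, since $X_{\hat p}(t) = B_{\hat p}(t) + \int_0^t R_{\hat p}(z)\,dz$ and the two-slope costs decompose the same way. The main device is summation by parts applied to the telescoping structure of $\hat p$. Writing $p^0_1(t) \equiv 1$ and $p^{k+1}_1(t) \equiv 0$, the three cases in the definition of $\hat p$ merge into the single formula $\hat p_i(t) = p^i_1(t) - p^{i+1}_1(t)$ for all $i \in \{0,\dots,k\}$. Abel summation of $B_{\hat p}(t) = \sum_{i=0}^k (p^i_1(t) - p^{i+1}_1(t)) b_i$ then converts the $b_i$ into the differences $b_i - b_{i-1} = b^i_1$, yielding $B_{\hat p}(t) = \sum_{i=1}^k p^i_1(t)\,b^i_1 = \sum_{i=1}^k B_{p^i}(t)$. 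The identical manipulation on $R_{\hat p}(t) = \sum_{i=0}^k (p^i_1(t) - p^{i+1}_1(t)) r_i$ produces the differences $r_{i-1} - r_i = r^i_0$ and, after collecting terms, $R_{\hat p}(t) = \sum_{i=1}^k (1 - p^i_1(t))\,r^i_0 = \sum_{i=1}^k R_{p^i}(t)$. Integrating the rental identity over $[0,t]$ and adding it to the buying identity gives the desired decomposition of $X_{\hat p}(t)$.

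The only delicate point I anticipate is the boundary bookkeeping in the two summations by parts, and this is precisely where the standing assumptions of this section are consumed: the leftover term $p^0_1(t)\,b_0$ on the buying side vanishes because $b_0 = 0$, and on the rental side the telescoped constant $\sum_{i=1}^k (r_{i-1} - r_i) = r_0 - r_k$ collapses cleanly because $r_k = 0$. With these terms disposed of, no further work is needed---well-definedness and realizability of $\hat p$ are already supplied by Lemmas~\ref{lem-well-defined} and~\ref{Lem:profile}---so the per-time identity immediately delivers the competitive bound.
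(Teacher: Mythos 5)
Your proposal is correct and follows essentially the same route as the paper's proof: both establish the per-time decomposition $X_{\hat p}(t) = \sum_{i=1}^k X_{p^i}(t)$ via the same telescoping/summation-by-parts computation (the paper writes out the buying term explicitly and handles the rent ``similarly by linearity''), then combine it with Lemma~\ref{lem:decomposition} and the $\frac{e}{e-1}$-competitiveness of each two-slope strategy $p^i$. Your explicit bookkeeping of where $b_0=0$ and $r_k=0$ are consumed merely spells out details the paper leaves implicit.
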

\begin{proof}
We first show that by linearity, the expected cost to the combined
strategy is the sum of the costs to the two-slope strategies, i.e.,
that $X_{\hat p}(t) = \sum_{i=1}^k X_{p^i}(t)$. For example, the
buying cost is
\[
B_{\hat p}(t)
= \sum_{i=0}^k \hat{p}_i(t) \cdot b_i
= \sum_{i=0}^{k-1} (p^i_1(t) - p^{i+1}_1(t)) \cdot b_i +
  p^k_1(t) \cdot b_k
= \sum_{i=1}^k p^i_1(t) \cdot (b_i - b_{i-1})
= \sum_{i=1}^k B_{p^i}(t)
\ .
\]
Similarly, $R_{\hat p}(t) = \sum_{i=1}^k R_{p^i}(t)$ by linearity, and
therefore,
\[
X_{\hat p}(t)
= B_{\hat p}(t) + \int_{z=0}^t R_{\hat p}(z) dz
= \sum_{i=1}^k B_{p^i}(t) + 
  \int_{z=0}^t \left( \sum_{i=1}^k R_{p^i}(z) \right) dz
= \sum_{i=1}^k X_{p^i}(t)~.
\]
Finally, by Lemma ~\ref{lem:decomposition} and the fact that the strategies
$p^1,\ldots,p^k$ are $\frac{e}{e-1}$-competitive we conclude  that
\[
X_{\hat p}(t) 
=    \sum_{i=1}^k X_{p^i}(t)
\leq \sum_{i=1}^k \frac{e}{e-1} \cdot \opt^i(t)
=    \frac{e}{e-1} \cdot \opt(t)
\]
which means that $\hat p$ is $\frac{e}{e-1}$-competitive.
\end{proof}

\paragraph{The case of $r_k>0$.}
We note that if the smallest rental rate $r_k$ is positive, then the
competitive ratio is strictly less that $\frac{e}{e-1}$: this can be
seen by considering a new instance where $r_k$ is subtracted from all
rental rates, i.e., $b'_i = b_i$ and $r'_i = r_i - r_k$ for all $0\leq
i \leq k$.  Suppose $p$ is $\frac{e}{e-1}$-competitive
with respect to $(r',k')$ (note that $r'_k=0$). Then the competitive
ratio of $p$ at time $t$ w.r.t.\ the original instance is:
\[
c(t)
=    \frac{X_p(t)}{\opt(t)}
= \frac{X'_p(t) + r_k \cdot t}{\opt'(t) + r_k \cdot t}
\le    \frac{\frac{e}{e-1} \cdot \opt'(t) + r_k \cdot t}{\opt'(t) + r_k
\cdot t}
=    \frac{e}{e-1} -
     \frac{1}{e-1} \cdot \frac{1}{\frac{\opt'(t)}{r_k \cdot t} + 1}
\]
$\frac{d}{dt} \opt'(t) = r_i-r_k$ for $t \in [s_{i-1},s_i)$.  Hence,
the ratio $\frac{\opt'(t)}{r_k \cdot t}$ is monotone decreasing, and
thus $c(t)$ is monotone decreasing as well.
It follows that
\[
c
\leq \frac{e}{e-1} -
     \frac{1}{e-1} \cdot \frac{1}{\frac{r_0-r_k}{r_k} + 1}
=    \frac{e - r_k/r_0}{e-1}
\]
Observe that $c = \frac{e}{e-1}$ when $r_k = 0$, and that $c = 1$ when
$r_k=r_0$ (i.e., when $k=0$). 

\section{An Optimal Online Algorithm}
\label{sec:additive}

In this section we develop an optimal online strategy for any given
additive multislope ski rental instance.  We
reduce the set of all possible strategies to a subset of much simpler
strategies, which on one hand contains an optimal strategy, and on the
other hand is easier to analyze, and in particular, allows us to
effectively find such an optimal strategy.

Consider an arbitrary profile. (Recall that we assume w.l.o.g.\ that
no slope is completely dominated by another.)  As a first
simplification, we confine ourselves to profiles where each $p_i$ has
only finitely many discontinuities.  This allows us to avoid
measure-theoretic pathologies without ruling out any reasonable
solution within the Church-Turing computational model.  It can be
shown that
we may
consider only continuous profiles (details omitted).

So let such a profile $p=(p_0,\ldots,p_k)$ be given.  We show that it
can be transformed into a profile of a certain structure without increasing
the competitive factor.  Our chain of transformations is as follows.
First, we show that it suffices to consider only simple profiles we
call ``prudent.''  Prudent strategies buy slopes in order, one by one,
without skipping and without buying more than one slope at a time.
 We then define the concept of ``tight'' profiles, which
are prudent profiles that spend money at a fixed rate relative to the
optimal offline strategy.  We prove that there exists a tight optimal
profile.  Furthermore, the best tight profile can be effectively
computed: Given a constant $c$, we show how to check whether there
exists a tight $c$-competitive strategy, and this way, using binary
search on $c$, we can find the best tight strategy.  Finally, we
explain how to construct that profile and a corresponding strategy.



\subsection{Prudent and Tight Profiles}
\label{ssec-prudent}

Our main simplification step is to show that it is
sufficient to consider only profiles that buy slopes consecutively one
by one.  Formally, \emph{prudent} profiles are defined as follows.

\begin{definition}[active slopes, prudent profiles]
  A slope $i$ is \emph{active} at time $t$ if $p_i(t)>0$.  A profile
  is called \emph{prudent} if at all times there is either one or two
  consecutive active slopes.
\end{definition}

At any given time $t$, at least one slope is active because
$\sum_ip_i(t)=1$ by the problem definition.  Considering
\Eqr{Eqn:back} as well, we see that a continuous prudent profile
progresses from one slope to next without skipping any slope in
between: once slope $i$ is fully ``paid for'' (i.e., $p_i(t)=1$), the
algorithm will start buying slope $i+1$.

We now prove that the set of continuous prudent profiles contains an
optimal profile.  Intuitively, the idea is that a non-prudent profile must
have two non-consecutive slopes with positive probability at some
time.  In this case we can ``shift'' some probability toward a middle
slope and only improve the overall cost.

\begin{theorem}
\label{thm-prudent}
If there exists a continuous $c$-competitive profile $p$ for some $c
\geq 1$, then there exists a prudent $c$-competitive profile $\tp$.
\end{theorem}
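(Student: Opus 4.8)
The plan is to prove Theorem~\ref{thm-prudent} by showing that any non-prudent profile can be converted into a prudent one through a sequence of local ``probability-shifting'' operations, each of which preserves the competitive ratio $c$. First I would observe what it means for a continuous profile to fail to be prudent: by definition, there is some time $t$ at which either three or more slopes are simultaneously active, or two non-consecutive slopes are active. The key structural fact I want to exploit is that the cost functions $X_p(t)$ are \emph{linear} in the probability vector $p(t)$ at each fixed time, since $B_p(t)=\sum_i p_i(t) b_i$ and $R_p(t)=\sum_i p_i(t) r_i$ are both linear in $p(t)$, and the total cost $X_p(t)=B_p(t)+\int_0^t R_p(z)\,dz$ inherits this linearity in the instantaneous profile. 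This linearity is what makes probability shifting a cost-controllable operation.

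The heart of the argument is the shifting lemma. Suppose at some time we have positive probability mass on two non-consecutive slopes $i$ and $j$ with $j>i+1$, and (say) nothing strictly between them. The idea stated in the paper is to move some probability onto a middle slope $\ell$ with $i<\ell<j$. I would move mass from $i$ and $j$ simultaneously onto $\ell$ in such a way that the total mass and, crucially, the ``buying level'' $\sum_{m\le n} p_m$ are controlled so as to respect monotonicity \Eqr{Eqn:back}. The natural quantity to preserve is the rental cost contribution or the prefix-sum structure; the cleanest choice is to shift mass keeping $\sum_m p_m=1$ fixed while lowering the spread of the distribution. Because both $b_m$ and $r_m$ are ``convex-like'' along the optimal envelope, and because a middle slope $\ell$ lies weakly below the chord connecting the cost lines of $i$ and $j$ near their relevant time range, concentrating mass toward the middle can only decrease (or leave unchanged) both $B_p$ and $R_p$ pointwise. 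Since the comparison benchmark $\opt(t)$ is unaffected by the algorithm's profile, decreasing $X_p(t)$ pointwise cannot increase the competitive ratio, so the shifted profile remains $c$-competitive.

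The main obstacle I expect is not the single-time cost comparison but the \emph{global consistency} of the transformation across all times: a shift performed at one time $t$ changes the profile, and I must ensure the resulting modified profile is still a valid profile everywhere — namely that $\sum_i p_i(t)=1$ for all $t$, that continuity is preserved, that monotonicity~\eqref{Eqn:back} still holds, and that the number of active slopes genuinely decreases so the process terminates. The delicate point is that the set of ``bad'' times (where non-prudence occurs) may be an interval, not a single instant, so the shift must be defined as a function of $t$ over that whole interval in a way that remains continuous at the interval's endpoints where it must match the unmodified profile. I would handle this by defining, for each time, a canonical ``squeezing'' map that pushes all probability mass toward the two consecutive slopes straddling the current expected buying level, and then argue that this map is continuous in $t$ and reduces the cost at every time simultaneously.

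To organize the proof cleanly, I would first reduce to eliminating a single non-consecutive pair via the shifting lemma, prove that this weakly reduces $B_p(t)$ and $\int_0^t R_p$ at every $t$ (hence preserves $c$-competitiveness), and then apply the reduction repeatedly (or as a single global squeeze) until at every time only one or two consecutive slopes carry positive probability. The termination/progress measure — for instance, the total number of active slopes integrated over time, or the ``width'' of the support — decreases under each shift, guaranteeing convergence to a prudent profile $\tp$. The routine calculations I would defer are the explicit verification that the middle slope lies below the relevant chord (a direct consequence of the ordering $b_{i-1}<b_i$, $r_{i-1}>r_i$ and the convexity of the lower envelope $\opt$) and the bookkeeping showing $\sum_i\tp_i(t)=1$ after squeezing.
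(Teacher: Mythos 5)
You take essentially the same route as the paper: its proof defines precisely your ``canonical squeezing map,'' placing all mass on the two consecutive slopes straddling the buying level $B_p(t)$ (so that $B_{\tp}(t)=B_p(t)$), deriving prudence and continuity of $\tp$ from the continuity of $B_p$, and proving $R_{\tp}(t)\le R_p(t)$ pointwise by finitely many shifts of mass from the two extreme active slopes onto a middle slope, justified by the envelope-convexity inequality you invoke. The only slippage is your blanket claim that shifting toward the middle decreases both $B_p$ and $R_p$ --- that holds only for suitably chosen shift ratios; the paper's choice (and, implicitly, your final map's) keeps $B_p$ fixed and decreases only the rent, which is also what keeps the monotonicity condition \Eqr{Eqn:back} intact.
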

\begin{proof}
Let $p=(p_0,\ldots,p_k)$ be a profile and suppose that all the $p_i$s
are continuous.  It follows that $B_p$ is also continuous.  Define
$\ui(t) = \max \Set{i : b_i \leq B_p(t)}$ and $\oi(t) = \min \Set{i :
b_i \geq B_p(t)}$.  In words, $\ui(t)$ is the most expensive slope
that is fully within the buying budget of $p$ at time $t$, and
$\oi(t)$ is the most expensive slope that is at least partially within
the buying budget of $p$ at time $t$.  Obviously, $\ui(t) \leq \oi(t)
\leq \ui(t)+1$ for all $t$.
Now, we define $\tp$ as follows:
\[
\tp_i(t) = 
\begin{cases}
\displaystyle \frac{b_{\oi}-B_p(t)}{b_{\oi}-b_{\ui}} 
& i = \ui(t) \textrm{ and }\ui(t)\ne\oi(t), \\
 \frac{B_p(t)-b_{\ui}}{b_{\oi}-b_{\ui}} 
& i = \oi(t) \textrm{ and }\ui(t)\ne\oi(t), \\
1 & i = \ui(t) = \oi(t), \\
0 & \textrm{otherwise.}
\end{cases}
\] 
It is not hard to verify that $\sum_i p_i(t)=1$ for every time $t$.
Furthermore, observe that $\tp$ is prudent, because $B_p$ is
continuous.  It remains to show that $\tp$ is $c$-competitive.  We do
so by proving that $B_{\tp}(t) = B_p(t)$ and $R_{\tp}(t) \leq R_p(t)$
for all $t$.  First, directly from definitions we have
\begin{eqnarray*}
B_{\tp}(t) 
&=& p_{\ui(t)}(t) \cdot b_{\ui(t)} + p_{\oi(t)}(t) \cdot b_{\oi(t)} \\
&=& \frac{b_{\oi(t)}-B_p(t)}{b_{\oi(t)}-b_{\ui(t)}} \cdot b_{\ui(t)} +
    \frac{B_p(t)-b_{\ui(t)}}{b_{\oi(t)}-b_{\ui(t)}} \cdot b_{\oi(t)}
~~=~~ B_p(t)~. 
\end{eqnarray*}

Consider now rental payments.  To prove that $R_{\tp}(t) \leq R_p(t)$
for every time $t$ we construct inductively a sequence of probability
distributions $p=p^0,\ldots,p^\ell=\tp$.  The first distribution $p^0$
is defined to be $p$. Suppose now that $p^j$ is not prudent.
Distribution $p^{j+1}$ is obtained from $p^j$ as follows. For any $t$
such that there are two non-consecutive slopes with positive
probability, let $i_1(t),i_2(t),i_3(t)$ be any three slopes such that
$i_1(t) = \argmin \{i : p^j_i(t) > 0\}$, $i_3(t) = \argmax \{i :
p^j_i(t) > 0\}$, and $i_1(t) < i_2(t) < i_3(t)$ (such $i_2(t)$ exists
because $p^j$ is not prudent).  Define
\[
p^{j+1}_i(t) =
\begin{cases}
 p^j_i(t) - \frac{\Delta^j(t)}{b_{i_2(t)}-b_{i_1(t)}} & i = i_1(t), \\
 p^j_i(t) + \frac{\Delta^j(t)}{b_{i_2(t)}-b_{i_1(t)}}
      + \frac{\Delta^j(t)}{b_{i_3(t)}-b_{i_2(t)}}    & i = i_2(t), \\
 p^j_i(t) - \frac{\Delta^j(t)}{b_{i_3(t)}-b_{i_2(t)}} & i = i_3(t), \\
p^j_i(t)                     & i \not\in \set{i_1(t),i_2(t),i_3(t)}
\end{cases}
\]
where $\Delta^j(t) > 0$ is maximized so that $p^{j+1}_i(t) \geq 0$ for
all $i$.  Intuitively, we shift a maximal amount of probability mass
from slopes $i_1(t)$ and $i_3(t)$ to the middle slope $i_2(t)$. The
fact that $\Delta^j(t)$ is maximized means that we have either that
$p^{j+1}_{i_1}(t)=0$, or $p^{j+1}_{i_3}(t)=0$, or both. In any case,
we may already conclude that $\ell < k$.  Also note that by
construction, for all $t$ we have $B_{p^{j+1}}(t) = \sum_i
p^{j+1}_i(t) \cdot b_i = \sum_i p^j_i(t) \cdot b_i = B_{p^j}(t)$.
Hence, $p^\ell=\tp$.

As to the rental cost, fix a time $t$, and consider now  the rent
paid by $p^j$ and $p^{j+1}$:
\begin{eqnarray*}
R_{p^j}(t)\!\!\!\! &-&\!\!\!\! R_{p^{j+1}}(t)~~=~~\\
& =& r_{i_1(t)} \frac{\Delta^j(t)}{b_{i_2(t)}-b_{i_1(t)}}- 
    r_{i_2(t)} \paren{\frac{\Delta^j(t)}{b_{i_2(t)}-b_{i_1(t)}} 
                      \frac{\Delta^j(t)}{b_{i_3(t)}-b_{i_2(t)}}} + 
 r_{i_3(t)} \frac{\Delta^j(t)}{b_{i_3(t)}-b_{i_2(t)}}\\
& =& \Delta^j(t) \cdot 
    \paren{\frac{r_{i_1(t)} - r_{i_2(t)}}{b_{i_2(t)} - b_{i_1(t)}} -
           \frac{r_{i_2(t)} - r_{i_3(t)}}{b_{i_3(t)} - b_{i_2(t)}}} 
~~>~~ 0  
\end{eqnarray*}
where the last inequality follows from the fact that if $i < j$, then
$\frac{b_j - b_i}{r_i - r_j}$ is the $x$ coordinate of the
intersection point between the slopes $i$ and $j$.
\end{proof}


Our next step is to consider profiles that invest in buying as much as
possible under some spending rate cap.  Our approach is motivated by
the following intuitive observation.

\begin{observation}
\label{Obs:tight}
Let $p^1$ and $p^2$ be two randomized prudent profiles.  If
$B_{p^1}(t) \geq B_{p^2}(t)$ for every $t$, then $R_{p^1}(t) \leq
R_{p^2}(t)$ for every $t$.
\end{observation}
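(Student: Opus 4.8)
The plan is to reduce the observation to a single structural fact about prudent profiles: for any prudent profile, the rental rate $R_p(t)$ is determined by the buying cost $B_p(t)$ through one fixed, decreasing function that does not depend on which prudent profile we chose. Granting this, the observation is immediate, since applying a decreasing function to the hypothesis $B_{p^1}(t) \ge B_{p^2}(t)$ reverses the inequality and yields $R_{p^1}(t) \le R_{p^2}(t)$ for every $t$.

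Concretely, I would introduce the piecewise-linear interpolant $\rho \colon [b_0,b_k] \to \reals$ through the points $(b_0,r_0),\ldots,(b_k,r_k)$, defined for $b_j \le B \le b_{j+1}$ by $\rho(B) \eqdf r_j + \frac{B-b_j}{b_{j+1}-b_j}(r_{j+1}-r_j)$. Because the $b_i$ are strictly increasing and the $r_i$ strictly decreasing (by the normalization of Section~\ref{sec:model}), $\rho$ is strictly decreasing on its entire domain. The substantive step is then to verify that every prudent profile $p$ satisfies $R_p(t) = \rho(B_p(t))$ at every time $t$, and this is exactly where prudence is used: at time $t$ the support of $p(t)$ is either a single slope $j$ or two consecutive slopes $j,j+1$. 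In the single-slope case $B_p(t)=b_j$ and $R_p(t)=r_j=\rho(b_j)$. In the two-slope case $p_j(t)+p_{j+1}(t)=1$, so setting $\lambda=p_{j+1}(t)$ gives $B_p(t)=b_j+\lambda(b_{j+1}-b_j)$ and $R_p(t)=r_j+\lambda(r_{j+1}-r_j)$; eliminating $\lambda$ produces precisely $R_p(t)=\rho(B_p(t))$. Hence $R_p(t)$ depends on $p$ only through $B_p(t)$, and via the same function $\rho$ for both $p^1$ and $p^2$.

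The only real obstacle is making this third step airtight, namely confirming that the two active slopes of a prudent profile are genuinely \emph{consecutive} (so the correct linear piece of $\rho$ is selected) and that the one-slope and two-slope cases agree at the breakpoints $B=b_j$. Both points follow directly from the definition of prudence, so I do not expect genuine difficulty here; once $\rho$ is in hand, the conclusion drops out purely from its monotonicity. Notably, this argument needs only that $r_i$ is decreasing, not any convexity of the slopes, so it applies verbatim to the prudent profiles constructed in Theorem~\ref{thm-prudent}.
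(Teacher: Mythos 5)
Your proof is correct. The paper in fact gives no proof of this observation at all---it is stated as self-evident, accompanied only by the intuitive gloss that investing available funds in buying as soon as possible results in lower rent---so your argument supplies exactly the missing formalization, and it is the natural one: prudence forces the pair $(B_p(t),R_p(t))$ onto the fixed piecewise-linear curve $\rho$ through $(b_0,r_0),\ldots,(b_k,r_k)$, which is strictly decreasing because the $b_i$ increase while the $r_i$ decrease, so $B_{p^1}(t)\ge B_{p^2}(t)$ immediately yields $R_{p^1}(t)=\rho(B_{p^1}(t))\le\rho(B_{p^2}(t))=R_{p^2}(t)$. Your case analysis (one active slope versus two consecutive active slopes) exhausts the definition of prudence, the linear pieces of $\rho$ agree at the breakpoints $b_j$, and prudence is indeed indispensable here: for non-prudent profiles the buying cost does not determine the rent, and equal $B_p$ can give strictly larger $R_p$---which is precisely the phenomenon exploited in the proof of Theorem~\ref{thm-prudent}.
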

In other words, investing available funds in buying as soon as
possible results in lower rent, and therefore in more available funds.
Hence, we define a class of profiles which spend money as soon as
possible in buying, as long as there is a better slope to buy, namely
as long as $p_k(t)<1$.

\begin{definition}
Let $c \geq 1$.  A prudent $c$-competitive profile $p$ is called
\emph{tight} if $X_p(t) = c \cdot \opt(t)$ for all $t$ with $p_k(t) < 1$.
\end{definition}

Clearly, if the last slope is flat, i.e., $r_k=0$, then it must be the
case that $p_k(s_k)=1$ for any profile with finite competitive factor:
otherwise, the cost to the profile will grow without bound while the
optimal cost remains constant.  However, it is important to note that
if $r_k>0$, there may exist an optimal profile $p$ that never buys the
last slope, but still its expected spending rate as $t$ tends to
infinity is $c \cdot r_k$. 

It is easy to see that a tight profile can achieve any achievable
competitive factor.

\begin{lemma}
\label{Lem:tight}
If there exists a $c$-competitive prudent profile $p$ for some $c \geq
1$, then there exists a $c$-competitive tight profile $\tp$.
\end{lemma}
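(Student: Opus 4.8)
The plan is to realize the tight profile as the ``greedy'' prudent profile that spends as aggressively as the competitive budget allows, and to invoke the hypothesis only in order to guarantee that this greedy profile never has to undo a purchase. The first step is a convenient reparametrization. Since $\tp$ is prudent, at each time its (one or two consecutive) active slopes, and hence its rental rate, are determined by its cumulative buying cost $B_{\tp}(t)\in[0,b_k]$ alone: we may write $R_{\tp}(t)=\rho(B_{\tp}(t))$, where $\rho$ is the continuous, strictly decreasing, piecewise-linear function with $\rho(b_i)=r_i$. In this language Observation~\ref{Obs:tight} is just the monotonicity of $\rho$. I then define $\tp$ through its budget: let $B_{\tp}(t)$ be the \emph{largest} value in $[0,b_k]$ for which $B_{\tp}(t)+\int_{0}^{t}\rho(B_{\tp}(z))\,dz\le c\cdot\opt(t)$. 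While this largest value is below $b_k$ it is attained with equality, so $X_{\tp}(t)=c\cdot\opt(t)$ exactly; that is, $\tp$ is tight wherever $p_k(t)<1$. Equivalently, before the budget saturates at $b_k$ the function $B_{\tp}$ obeys $B_{\tp}'(t)=c\,r_{i(t)}-\rho(B_{\tp}(t))$ with $B_{\tp}(0)=0$, where $i(t)$ is the optimal offline slope at time $t$.

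Granting for the moment that $B_{\tp}$ is non-decreasing up to the time $t^*$ at which it first reaches $b_k$, the profile $\tp$ is a legitimate prudent profile and its competitiveness is immediate. For $t\le t^*$ we have $X_{\tp}(t)=c\cdot\opt(t)$ by construction, while for $t>t^*$ the budget stays at $b_k$, so $X_{\tp}(t)=c\cdot\opt(t^*)+r_k\,(t-t^*)$; since $\opt'(z)=r_{i(z)}\ge r_k$ gives $\opt(t)\ge\opt(t^*)+r_k(t-t^*)$ and $c\ge1$, this yields $X_{\tp}(t)\le c\cdot\opt(t)$. (If the budget never reaches $b_k$ then $\tp$ is tight for all $t$ and $X_{\tp}=c\cdot\opt$ throughout, which is exactly the case flagged after the definition when $r_k>0$.) Everything therefore reduces to the single claim that $B_{\tp}$ never decreases before saturating, equivalently that $\rho(B_{\tp}(t))\le c\,r_{i(t)}$ whenever $B_{\tp}(t)<b_k$.

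This last claim is the heart of the proof and the only place the hypothesis is used. Inside an interval $[s_m,s_{m+1}]$ the rate $r_{i(t)}=r_m$ is constant, so once $\rho(B_{\tp})\le c\,r_m$ holds the budget is non-decreasing and $\rho(B_{\tp})$ only decreases further; hence the inequality can be broken only at a breakpoint $s_{m+1}$, where $r_{i(t)}$ drops from $r_m$ to $r_{m+1}$. To rule this out I first establish that $\tp$ carries the \emph{maximal} budget among all $c$-competitive prudent profiles: comparing the tight identity $B_{\tp}(t)=c\cdot\opt(t)-\int_0^t\rho(B_{\tp})$ with the inequality $B_q(t)\le c\cdot\opt(t)-\int_0^t\rho(B_q)$ satisfied by any competitive prudent $q$, and using that $\rho$ is decreasing and Lipschitz, a standard continuity (Gronwall) argument gives $B_{\tp}(t)\ge B_q(t)$ on the range where $\tp$ is defined; in particular $B_{\tp}\ge B_p$. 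Now suppose, for contradiction, that at some first breakpoint $\tau=s_{m+1}$ with $B_{\tp}(\tau)<b_k$ we had $\rho(B_{\tp}(\tau))>c\,r_{m+1}$. Then every competitive prudent $q$ satisfies $B_q(\tau)\le B_{\tp}(\tau)$, hence $R_q(\tau)=\rho(B_q(\tau))\ge\rho(B_{\tp}(\tau))>c\,r_{m+1}=c\cdot\opt'(\tau^+)$. Since $R_q$ is non-increasing and $q$'s budget is capped by $B_{\tp}(\tau)$, just after $\tau$ the profile $q$ must either keep a rental rate above $c\,r_{m+1}$ or pay for a burst of buying to lower it; either way $X_q$ grows strictly faster than $c\cdot\opt$ on $[\tau,s_{m+2}]$ and eventually exceeds it, contradicting that $q$ (in particular $p$) is $c$-competitive. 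Thus the inequality holds at every breakpoint, $B_{\tp}$ is non-decreasing up to saturation, and $\tp$ is the desired tight $c$-competitive profile. The main obstacle is precisely this feasibility step --- converting a would-be ``roll-back'' of the greedy profile into a proof that \emph{no} competitive profile could survive the ensuing high-rent trap --- for which the budget domination $B_{\tp}\ge B_p$ supplied by Observation~\ref{Obs:tight} is the essential tool.
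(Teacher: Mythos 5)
Your proposal is correct in substance and follows the same skeleton as the paper's proof: define $\tp$ greedily so that $X_{\tp}(t)=c\cdot\opt(t)$ as long as $\tp_k(t)<1$, and use Observation~\ref{Obs:tight} together with the existence of the competitive profile $p$ to show this greedy profile is feasible, i.e., that its budget never has to decrease. The difference is in how the feasibility step is closed, and here your route is genuinely different --- and sounder. The paper closes it in one line: $R_{\tp}(t)\le R_p(t)$ by Observation~\ref{Obs:tight}, and ``since $p$ is $c$-competitive it follows that $R_p(t)\le c\cdot\frac{d}{dt}\opt(t)$.'' That pointwise claim is not literally true: a competitive profile that has accumulated slack may temporarily pay rent above $c\cdot\frac{d}{dt}\opt$. (Two slopes, $b_0=0,r_0=1$, $b_1=1,r_1=0$: the profile with $p_0(t)=\min\{1,1/t^2\}$ is $3$-competitive, yet $R_p(t)>0=3\cdot\frac{d}{dt}\opt(t)$ for all $t>1$.) Moreover, invoking Observation~\ref{Obs:tight} at all presupposes $B_{\tp}\ge B_p$ pointwise, which is not automatic --- it is exactly the Gronwall-type domination you prove. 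Your breakpoint/trap argument replaces the false pointwise claim with the correct amortized one: if the greedy profile got stuck at a breakpoint $\tau$, then every competitive prudent $q$ has budget dominated by $B_{\tp}(\tau)$, hence rent strictly above the allowance rate $c\,r_{m+1}$, and no way to escape without overspending. So the paper buys brevity at the cost of a step that does not survive scrutiny, while you buy correctness at the cost of the Gronwall lemma and a case analysis at breakpoints.

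One step of yours should be made quantitative: ``either way $X_q$ grows strictly faster than $c\cdot\opt$ \dots and eventually exceeds it'' needs the slack bound that your own domination provides. Since $B_{\tp}\ge B_q$ pointwise gives $\int_0^\tau\rho(B_{\tp})\le\int_0^\tau\rho(B_q)$, the slack satisfies $c\cdot\opt(\tau)-X_q(\tau)=\bigl(B_{\tp}(\tau)-B_q(\tau)\bigr)+\int_0^\tau\bigl(\rho(B_{\tp})-\rho(B_q)\bigr)\le D$, where $D=B_{\tp}(\tau)-B_q(\tau)$ is the budget deficit. While $B_q(t)\le B_{\tp}(\tau)$, the excess $X_q(t)-c\cdot\opt(t)$ grows at rate at least $\delta=\rho(B_{\tp}(\tau))-c\,r_{m+1}>0$ plus the buying rate (note $\frac{d}{dt}\opt$ only decreases after $\tau$, so later intervals only help); and escaping the region $B_q\le B_{\tp}(\tau)$ costs at least $D$ in buying. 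Hence by time $\tau+T$ the excess is at least $-D+D+\delta T>0$ (if $q$ escapes) or $-D+\delta T$ (if it never does), a contradiction in finite time in either case. With that sentence spelled out, your proof is complete, and is in fact a more rigorous account of Lemma~\ref{Lem:tight} than the one in the paper.
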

\begin{proof}
Let $\tp$ be the prudent profile satisfying $X_{\tp}(t)=c \cdot
\opt(t)$ for all $t$ for which $\tp_k(t) < 1$.
%
We need to show that $\tp$ is feasible.  Since by definition, $p$ buys
with any amount left, it suffices to show that for all $t$, the rent
paid by $p$ is at most $c \cdot \frac{d}{dt} \opt(t)$.  Indeed,
$R_{\tp}(t) \leq R_p(t)$ for every $t$ due to
Observation~\ref{Obs:tight}, and since $p$ is $c$-competitive it
follows that $R_p(t) \leq c \cdot \frac{d}{dt}\opt(t)$ and we are
done.
\end{proof}



\subsection{Constructing Optimal Online Strategies}
\label{ssec-alg}

We now use the results above to construct an algorithm that produces
the best possible online strategy for the multislope problem.  The
idea is to guess a competitive factor $c$, and then try to construct a
$c$-competitive tight profile.  Given a way to test for success, we
can apply binary search to find the optimal competitive ratio $c$ to
any desired precision.

The main questions are how to test whether a given $c$ is feasible,
and how to construct the profiles.  We answer these questions
together: given $c$, we construct a tight $c$-competitive profile
until either we fail (because $c$ was too small) or until we can
guarantee success.  In the remainder of this section we describe how
to construct a tight profile $p$ for a given competitive factor $c$.

We begin with analyzing the way a tight profile may spend money.
Consider the situation at some time $t$ such that $p_k(t)<1$.  Let $j$
be the maximum index such that $s_j \leq t$.  Then $\frac{d}{dt}
\opt(t) = r_j$.  Therefore, the spending rate of a tight profile at
time $t$ must be $c \cdot r_j$.  If $j<k$, the tight profile may spend
at rate $c \cdot r_j$ until time $s_{j+1}$ (or until $p_k(t)=1$), and
if $j=k$ the tight profile may continue spending at this rate forever.
Hence, for $t \in (s_j,s_{j+1})$, we have
\begin{equation}
\label{eq-tight}
\frac{d}{dt}B_p(t) + R_p(t) 
= c \cdot \frac{d}{dt}\opt(t) 
= c \cdot r_j~.
\end{equation}
Since $p$ is tight and therefore prudent, we also have, assuming
$\ui(t)=i$ and $\oi(t)=i+1$, that
$$B_p(t) = p_{i}(t) b_{i} + p_{i+1}(t) b_{i+1}~, $$
{and}
$$
R_p(t) = p_{i}(t) r_{i} + p_{i+1}(t) r_{i+1}~.
$$
Plugging the above equations into \Eqr{eq-tight}, we get
\[
\frac{d}{dt}p_i(t) b_i + 
\frac{d}{dt}p_{i+1}(t) b_{i+1} + 
p_i(t) r_i + 
p_{i+1}(t) r_{i+1} 
= c \cdot r_j
\]
Since $p$ is prudent, $p_i(t) = 1- p_{i+1}(t)$ and hence
$\frac{d}{dt}p_i(t) = - \frac{d}{dt} p_{i+1}(t)$.  It follows that
\begin{align}
\label{Eqn:differ}
\frac{d}{dt}p_{i+1}(t) + p_{i+1}(t) \cdot \frac{r_{i+1} - r_i}{b_{i+1}-b_i}
= \frac{c \cdot r_j - r_i}{b_{i+1}-b_i}
\end{align}
A solution to a differential equation of the form $y'(x) + \alpha y(x)
= \beta$ where $\alpha$ and $\beta$ are constants is $y =
\frac{\beta}{\alpha} + \Gamma \cdot e^{-\alpha x}$, where $\Gamma$
depends on the boundary condition.  Hence in our case we conclude that
\begin{equation}
  \label{eq:piece}
  p_{i+1}(t) 
  = \frac{c \cdot r_j - r_i}{r_{i+1}-r_i} + 
  \Gamma \cdot e^{\frac{r_i-r_{i+1}}{b_{i+1}-b_i} \cdot t} ~, 
\end{equation}
and $p_i(t) = 1 - p_{i+1}(t)$, where the constant $\Gamma$ is
determined by the boundary condition.

\Eqr{eq:piece} is our tool to construct $p$ in a piecewise 
iterative fashion.  For example, we start constructing $p$ from $t=0$
using
$p_1(t) 
= \frac{c \cdot r_0 - r_0}{r_1-r_0} + 
  \Gamma\cdot e^{\frac{r_0-r_1}{b_1-b_0} \cdot t}$
and the boundary condition $p_1(0)=0$. We get that $\Gamma = 
\frac{r_0(c-1)}{r_0-r_1}$, i.e., 
\[
p_1(t) 
= \frac{r_0(c-1)}{r_0-r_1} \cdot (e^{\frac{r_0-r_1}{b_1-b_0} t} - 1)~,
\]
and this holds for all $t\le\min(s_1,t_1)$, where $t_1$ is the
solution to  $p_1(t_1)=1$.

In general, \Eqr{Eqn:differ} remains true so long as there is no
change in the spending rate and in the slope the profile $p$ is
buying.  The spending rate changes when $t$ crosses $s_j$, and the
profile starts buying slope $i+2$ when $p_{i+1}(t)=1$.

\begin{algorithm}[t]
\begin{small}
\caption{-- $\Feasible(c,\cM)$: true if the $k$-ski instance
  $\cM=(b,r)$ admits competitive factor $c$}
\label{alg-feasible}
\begin{algorithmic}[1]
  \STATE Let $s_i = \frac{b_i-b_{i-1}}{r_{i-1}-r_i}$ for each
         $1 \leq i \leq k$
  \STATE Boundary\_Condition $\gets$ ``$p_1(0)=0$''
  \STATE $j \gets 0$; $i \gets 1$
  \LOOP
    \STATE Define $p_{i}(t) = 
           \frac{c \cdot r_j - r_{i-1}}{r_{i}-r_{i-1}} + 
           \Gamma \cdot 
           \exp({\frac{r_{i-1}-r_{i}}{b_{i}-b_{i-1}} \cdot t})$
    \STATE Try to solve for $\Gamma$ using Boundary\_Condition
    \STATE \textbf{if} no solution \textbf{then return} \ \textsc{false}
        \COMMENT{possible escape if not feasible}
    \STATE $y\gets p_{i}(s_{j})$
    \IF{$y<1$}
      \STATE Boundary\_Condition $\gets$ ``$p_{i}(s_{j})=y$''
      \STATE $j\gets j+1$
        \COMMENT{continue at the next interval $[s_j,s_{j_1}]$}
    \ELSE
      \STATE Let $x$ be such that $p_{i}(x)=1$
      \STATE Boundary\_Condition $\gets$ ``$p_{i+1}(x)=0$''
      \STATE $i\gets i+1$
        \COMMENT{move to next slope}
    \ENDIF
    \STATE \textbf{if} $i>k$ or $j\ge k$ \textbf{then return} \ \textsc{true} 
        \COMMENT{we're done}
  \ENDLOOP
\end{algorithmic}
\end{small}
\end{algorithm}

We can now describe our algorithm.  Given a ratio $c$,
Algorithm~\Feasible\ is able to construct the tight profile $p$ or to
determine that such a profile does not exist.  It starts with the
boundary condition $p_1(0)=0$ and reveals the first part of the
profile as shown above.  Then, each time the spending rate changes or
there is a change in $\ui(i)$ it moves to the next differential
equation with a new boundary condition.  After at most $2k$ such
iterations it either computes a $c$-competitive tight profile $p$ or
discovers that such a profile is infeasible.  Since we are able to
test for success using Algorithm~\Feasible, we can apply binary search
to find the optimal competitive ratio to any desired precision. 

We note that it is easy to construct a strategy that corresponds to
any given prudent profile $p$, as described in the proof of
Lemma~\ref{Lem:profile}.
We conclude with the following theorem.

\begin{theorem}
  There exists an $O(k \log \inv{\eps})$ time algorithm that given an
  instance of the additive multislope ski rental problem for which the
  optimal randomized strategy has competitive ratio $c$, computes 
  a $(c+\eps)$-competitive strategy.
\end{theorem}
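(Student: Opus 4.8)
The plan is to combine the structural reductions of Section~\ref{sec:additive} with the constructive procedure \Feasible\ and a binary search over the competitive factor. The starting point is that the optimal competitive ratio, call it $c^\star$, is achieved by a tight profile: by Theorem~\ref{thm-prudent} any continuous $c$-competitive profile can be replaced by a prudent one, and by Lemma~\ref{Lem:tight} any prudent $c$-competitive profile can be replaced by a tight one. Hence the predicate ``the instance $\cM$ admits competitive factor $c$'' holds precisely when a tight $c$-competitive profile exists, which is exactly what \Feasible$(c,\cM)$ tests by building that profile piece by piece using \Eqr{eq:piece}.

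The next ingredient is \emph{monotonicity} of this predicate: if $\cM$ admits factor $c$, then it admits every factor $c' \ge c$, since any $c$-competitive strategy is trivially $c'$-competitive. Consequently the set of feasible ratios is an interval $[c^\star,\infty)$, and \Feasible\ returns \textsc{true} exactly for $c \ge c^\star$; this is what licenses the binary search. To bound the search range, note that $c^\star \ge 1$ trivially, while the algorithm of Section~\ref{sec:par} exhibits an $\frac{e}{e-1}$-competitive strategy, so $c^\star \le \frac{e}{e-1}$. Thus it suffices to search inside the constant-width interval $[1,\frac{e}{e-1}]$.

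For the running time, I would argue that each call to \Feasible\ runs in $O(k)$ time: every loop iteration either advances the interval index $j$ (when the current piece reaches $s_j$ with $p_i(s_j)<1$) or advances the slope index $i$ (when $p_i$ reaches $1$); since $i$ and $j$ are non-decreasing and bounded by $k$, at most $2k$ iterations occur, each doing $O(1)$ work to solve for $\Gamma$. Binary search on $[1,\frac{e}{e-1}]$ to additive precision $\eps$ requires $O(\log\inv{\eps})$ calls to \Feasible, giving total time $O(k\log\inv{\eps})$. The search terminates with a value $c \le c^\star+\eps$ that \Feasible\ certifies feasible, together with the tight $c$-competitive profile $p$ it has constructed; applying the conversion of Lemma~\ref{Lem:profile} to $p$ yields the desired $(c^\star+\eps)$-competitive online strategy.

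The main obstacle is establishing the correctness of \Feasible\ itself, namely that its greedy piecewise construction faithfully realizes the (essentially unique) tight profile for the given $c$, and that its two failure modes---``no solution for $\Gamma$'' and the constructed $p_i$ leaving $[0,1]$---coincide exactly with infeasibility of $c$. Concretely, one must verify that tightness together with prudence determines $p$ uniquely on each interval via the boundary condition carried between iterations; that forcing the spending rate to $c\cdot r_j$ while preserving $\frac{d}{dt}B_p(t)\ge 0$ (equivalently, the majorization monotonicity of \Eqr{Eqn:back}) is possible iff the resulting $p_i(t)$ remains in $[0,1]$; and that a too-small $c$ necessarily drives some $p_i$ out of range, so that \Feasible\ reports failure. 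Once this equivalence is in hand, the monotonicity and the interval bounds above make the binary search and the final profile-to-strategy conversion routine.
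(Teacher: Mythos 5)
Your proposal follows essentially the same route as the paper: the reduction chain (continuous $\to$ prudent $\to$ tight profiles via Theorem~\ref{thm-prudent} and Lemma~\ref{Lem:tight}), Algorithm~\Feasible\ as the feasibility test built from \Eqr{eq:piece}, binary search over $c$ in a constant-width range, and the profile-to-strategy conversion of Lemma~\ref{Lem:profile}---indeed you make explicit some details (monotonicity of the feasibility predicate, the $[1,\frac{e}{e-1}]$ search interval, the $O(k)$ per-call bound) that the paper leaves implicit. The ``main obstacle'' you flag, namely that \Feasible's failure modes coincide exactly with infeasibility of $c$, is likewise left unproved in the paper, so your treatment is at least as complete as the original.
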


\section{An $e$-Competitive Strategy for the Non-Additive Case}
\label{sec:nonadditive}

In this section we consider the non-additive multislope ski rental
problem.  We present a simple randomized strategy which improves the
best known competitive ratio from $2/\ln 2 = 2.88$ to $e$. Our
technique is a simple application of  randomized repeated doubling
(see, e.g.,~\cite{Gal80}), used 
extensively in competitive analysis of online algorithms. For example,
deterministic repeated doubling appears in~\cite{AAFPW97}, and a 
randomized version appears in~\cite{MPT94}.


Before presenting the strategy let us consider the details of the non-additive model.
Augustine at el.~\cite{AIS04} define a general
non-additive model in which a transition cost $b_{ij}$ is associated
with every two states $i$ and $j$, and show that one may
assume w.l.o.g.\ that $b_{ij} = 0$ if $i>j$ and that $b_{ij} \leq b_j$
for every $i<j$.  Observe that we may further assume that $b_{ij} =
b_j$ for every $i$ and $j$, since the optimal (offline) strategy
remains the unchanged.  It follows that the strategies
from~\cite{ABFFLR99,BCN00,Dam03} that were designed for the case of
buying slopes ``from scratch'' also work for the general non-additive
case.

We propose using the following iterative online strategy, which is
similar to the one in \cite{Dam03}, except for the choice of the
``doubling factor.'' Specifically, the $j$th
iteration is associated with a bound $B_j$ on $\opt(\tau)$, where
$\tau$ denotes the termination time of the game.  We define $B_1 \eqdf
\opt(s_1) / \alpha^X$, where $\alpha>1$ and $X$ is a chosen at random
uniformly in $[0,1)$.  We also define $B_{j+1} = \alpha \cdot B_j$.
Let $\tau_j = \opt^{-1}(B_j)$ and let $i_j$ be the optimal offline
state at time $\tau_j$.  In case there are two such states, i.e.,
$\tau_j = s_i$ for some $i$, we define $i_j = i-1$.  It follows that
$i_1 = 0$.  In the beginning of the $j$th iteration the online
strategy buys $i_j$ and stays in $i_j$ until the this iteration ends.
The $j$th iteration ends at time $\tau_j$.
Observe that the first iteration starts with $B_1 = \opt(s_1)$, namely
we use slope 0 until $s_1$.
\begin{theorem}
  The expected cost of the strategy described above is at most $e$
  times the optimum.
\end{theorem}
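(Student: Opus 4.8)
The plan is to analyze the randomized repeated-doubling strategy and show that its expected cost is at most $e \cdot \opt(\tau)$, where $\tau$ is the (adversarially chosen) termination time. Fix $\tau$, and let $m$ be the index of the last iteration that actually begins before the game ends, i.e., the iteration during which the game terminates. The first step is to bound the cost incurred by the strategy over iterations $1,\ldots,m$. In each iteration $j<m$ the strategy buys slope $i_j$ and holds it until time $\tau_j = \opt^{-1}(B_j)$; since $i_j$ is the optimal offline state at time $\tau_j$, the cost paid during iteration $j$ (the transition cost $b_{i_j}$ plus the rent accrued while in that slope) is at most $\opt(\tau_j) = B_j$, because the optimal offline solution using slope $i_j$ costs exactly $B_j$ at time $\tau_j$ and the strategy pays no more than running in slope $i_j$ from time $0$. (Here I would use the reduction noted just above the theorem, that $b_{ij}=b_j$, so that buying slope $i_j$ afresh in iteration $j$ costs exactly $b_{i_j}$, matching the offline cost.) Hence the total cost through the end of iteration $m$ is at most $\sum_{j=1}^{m} B_j$.

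Next I would sum the geometric series. Since $B_{j+1} = \alpha B_j$, we have $\sum_{j=1}^m B_j = B_m \cdot \frac{1 - \alpha^{-m}}{1 - \alpha^{-1}} \le B_m \cdot \frac{\alpha}{\alpha - 1}$. The key point is to relate $B_m$ to $\opt(\tau)$. Because the game ends during iteration $m$, we have $\tau \le \tau_m = \opt^{-1}(B_m)$, so $\opt(\tau) \le B_m$, which unfortunately points the wrong way; the correct control comes from the \emph{previous} threshold: the game did not end before iteration $m$ started, so $\tau > \tau_{m-1}$ and thus $\opt(\tau) > B_{m-1} = B_m/\alpha$, giving $B_m < \alpha \cdot \opt(\tau)$. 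Combining, the cost is at most $\frac{\alpha}{\alpha-1} \cdot B_m \le \frac{\alpha^2}{\alpha - 1}\opt(\tau)$ for a \emph{fixed} realization of the random offset $X$. This deterministic-per-realization bound is too weak on its own; the whole point of randomizing $X$ uniformly in $[0,1)$ is to smooth out the dependence of $B_m$ on $\tau$ via the expectation.

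The heart of the argument — and the step I expect to be the main obstacle — is the expectation over $X$. The random offset makes $\log_\alpha B_m - \log_\alpha \opt(\tau)$ uniformly distributed (modulo the integer spacing of the thresholds), so that $B_m$ is, conditioned on $\tau$, of the form $\opt(\tau)\cdot \alpha^{U}$ where $U$ is uniform on an interval of length $1$. I would therefore compute $\E{B_m} = \opt(\tau) \int_0^1 \alpha^u \, du = \opt(\tau) \cdot \frac{\alpha - 1}{\ln \alpha}$, taking care to set up the indexing so that exactly this uniform-on-a-unit-interval structure emerges (this is where the definition $B_1 = \opt(s_1)/\alpha^X$ with $X$ uniform is used, and where one must be careful about the boundary iterations and the fact that $\opt$ is piecewise linear but $\opt^{-1}$ is still well defined and monotone). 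Substituting into the geometric bound gives expected cost at most $\frac{\alpha}{\alpha-1}\E{B_m} \le \frac{\alpha}{\alpha-1}\cdot \frac{\alpha-1}{\ln\alpha}\opt(\tau) = \frac{\alpha}{\ln\alpha}\opt(\tau)$.

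Finally I would optimize the doubling factor $\alpha$. Minimizing $f(\alpha) = \alpha/\ln\alpha$ over $\alpha>1$ by differentiation gives $f'(\alpha) = (\ln\alpha - 1)/(\ln\alpha)^2 = 0$, so $\ln\alpha = 1$, i.e., $\alpha = e$, yielding $f(e) = e/\ln e = e$. Thus choosing $\alpha = e$ makes the expected cost at most $e \cdot \opt(\tau)$, which is the claimed bound. The delicate parts to get exactly right are the per-iteration cost accounting (ensuring each iteration's full charge, including the fresh buy, is bounded by $B_j$ under the $b_{ij}=b_j$ normalization) and the precise setup of the uniform random variable so that the expectation $\E{B_m} = \opt(\tau)(\alpha-1)/\ln\alpha$ holds with no boundary loss; everything else is a routine geometric sum and a one-variable optimization.
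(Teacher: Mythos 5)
Your proof is correct and takes essentially the same route as the paper: bound each iteration's cost by $\opt(\tau_j)=B_j$, sum the geometric series to get $\frac{\alpha}{\alpha-1}\cdot B_m$, exploit the uniform offset $X$ to show $B_m=\alpha^U\cdot\opt(\tau)$ with $U$ uniform on a unit interval so that $\E{B_m}=\frac{\alpha-1}{\ln\alpha}\cdot\opt(\tau)$, and optimize $\frac{\alpha}{\ln\alpha}$ at $\alpha=e$. The paper performs the identical calculation (writing $\opt(\tau_{\ell+1})$ for your $B_m$ and integrating $\int_0^1\alpha^x\,dx$), and it glosses over the same first-iteration boundary case that you flag, so the differences are purely expository.
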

\begin{proof}
Observe that the first iteration starts with $B_1 = \opt(s_1)$, namely
we use slope 0 until $s_1$, and hence, 
if the game ends during the first iteration, i.e., before $s_1 /
\alpha^X$, then the online strategy is optimal. Consider now the case
where the game ends at time
$\tau \geq s_1 / \alpha^X$, and suppose that $\tau \in
[\tau_\ell,\tau_{\ell+1})$ for $\ell>1$.  In this case, the expected cost
of the online strategy is bounded by
\begin{align*}
\E{\sum_{j=1}^\ell \opt(\tau_j) + \opt(\tau)}
& \leq \E{\sum_{j=1}^{\ell+1} \opt(\tau_j)} 
  \leq \E{\frac{\alpha}{\alpha-1} \cdot \opt(\tau_{\ell+1})} \\
&  =    \E{\frac{\alpha^{2-X}}{\alpha-1} \cdot \opt(\tau)} \\
& =    \frac{\alpha}{\alpha-1} \cdot \int_{x=0}^1 \alpha^x dx \cdot
       \opt(\tau) 
  =    \frac{\alpha}{\ln \alpha} \cdot \opt(\tau) 
\end{align*}
By choosing $\alpha=e$ the competitive ratio is $\frac{\alpha}{\ln
\alpha} = e$ as required.
\end{proof}

\vskip-0.3cm
\section*{Acknowledgment}
We thank Seffy Naor and Niv Buchbinder for stimulating discussions.


\end{document}